\newtheorem{thm}{Theorem}[section]
\newtheorem{lem}[thm]{Lemma}
\newtheorem{definition}{Definition}
\newtheorem{theorem}{Theorem}
\renewcommand{\cite}{\citep}
\newcommand{\calb}{\mathcal{B}}
\newcommand{\calg}{\mathcal{G}}
\newcounter{protocolcount}
\begin{document}

\title{$k$-shot Broadcasting in Ad Hoc Radio Networks}
\author{Sushanta Karmakar\textsuperscript{1} \and Paraschos Koutris\textsuperscript{2}  \and Aris Pagourtzis\textsuperscript{3}\and Dimitris Sakavalas\textsuperscript{3}}

\maketitle

\begin{center}
 {
 \textsuperscript{1}Indian Institute of Technology, Guwahati (IITG)
 Department of Computer Science and Engineering, Amingaon, North Guwahati, India.\\
 \small{\tt{{sushantak@iitg.ernet.in}}}
 
 \textsuperscript{2}Department of Computer Sciences, University of Wisconsin-Madison. \\
 \small{\tt{{paris@cs.wisc.edu }}}
 
 \textsuperscript{3}School  of Electrical and Computer Engineering \\ National Technical University of Athens, 15780 Athens, Greece.\\
\small{\tt{{pagour@cs.ntua.gr}, {sakaval@corelab.ntua.gr}}}

}

\end{center}

%\toappear{You can use the \textbackslash toappear to add a footnote for grant and other information.  We prefer the copyright notice to be the last on the page.  \textbackslash toappear looks better we think than the \textbackslash thanks title option.}

%Include either the full copyright statement:
%\toappear{Copyright \copyright 2006, Australian Computer Society, Inc.  This paper appeared at the Seventeenth Australasian Database Conference (ADC2006), Hobart, Australia.  Conferences in Research and Practice in Information Technology (CRPIT), Vol. 49. Gillian Dobbie and James Bailey, Eds. Reproduction for academic, not-for profit purposes permitted provided this text is included. }

%or more easily (and recommended) use the alternative:

%\newcommand\conferencenameandplace{Twenty-Ninth Australasian Computer Science Conference (ACSC2006), Hobart, Australia}
%\newcommand\volumenumber{48}
%\newcommand\conferenceyear{2006}
%\newcommand\editorname{Vladimir Estivill-Castro and Gillian Dobbie}
%\toappearstandard 

\begin{abstract}
We study distributed broadcasting protocols with few transmissions (`shots') in radio networks where the topology is unknown. In particular, we examine the case in which a bound $k$ is given and a node may transmit at most $k$ times during the broadcasting protocol. Initially, we focus on oblivious algorithms for $k$-shot broadcasting, that is, algorithms where each node decides whether to transmit or not with no consideration of the transmission history. Our main contributions are (a) a lower bound of $\Omega(n^2/k)$ on the broadcasting time of any oblivious $k$-shot broadcasting algorithm and (b) an oblivious broadcasting protocol that achieves a matching upper bound, namely $O(n^2/k)$, for every $k \le \sqrt{n}$ and an upper bound of $O(n^{3/2})$ for every $k > \sqrt{n}$. We also study the general case of adaptive broadcasting protocols where  nodes decide whether to transmit based on all the available information, namely the transmission history known by each. We prove a lower bound of $\Omega\left(n^{\frac{1+k}{k}}\right)$ on the broadcasting time of any protocol by introducing the \emph{transmission tree} construction which generalizes previous approaches. 
%
%take a first step towards studying the behaviour of general broadcasting protocols by showing an $\Omega(n^2)$ lower bound for any adaptive 1-shot broadcasting protocol.
\end{abstract}
\vspace{.1in}

\noindent {\em Keywords:} distributed algorithm, broadcasting, $k$-shot, oblivious, radio network.

\section{Introduction}

Energy efficiency has become a central issue in wireless networks, due to the constantly increasing use of autonomous devices with limited power resources. A lot of recent research focuses on how to accomplish communication tasks in an energy-efficient manner without compromising the system performance too much. Much of the work so far has been devoted to the problem of adjusting the transmission ranges of nodes so that the energy cost is minimized.

However, if nodes transmit at a fixed power level it makes sense to consider the number of transmissions as an energy consumption measure. Such a study was initiated by Gasieniec \textsl{et al.}\ in~\cite{GKKPS08}, where broadcasting protocols with few transmissions (`shots') per node were considered for radio networks with known topology. Here, we study the problem in unknown radio networks, that is, networks in which nodes have no knowledge of the topology of the network.

We assume that a bound $k$ is given and a node may transmit at most $k$ times during the broadcasting protocol (\textit{$k$-shot broadcasting}); note that the bound $k$ may well represent the number of transmissions that the power supply of a node can handle. We also assume that the communication is \emph{synchronous}, that is, nodes may  transmit or receive simultaneously; each such simultaneous transmission is called a communication \emph{step}. At each step a node may decide to act either as a \emph{transmitter} or a \emph{receiver}. Whenever a node transmits all its neighbors receive the message. If, however, two neighbors of a node $v$ transmit simultaneously then a \emph{collision} occurs and $v$ receives no message. 

We examine in particular the task of \textit{broadcasting}. In the beginning, there is a unique node (the \emph{source} node) which holds a message $m$, and the goal of a broadcasting protocol is to disseminate $m$ to every node of the graph in a minimum number of steps.

We consider two types of protocols: \emph{adaptive} and \emph{oblivious} protocols; the former refers to protocols where a node may decide whether to transmit or not by taking into account any information it has received during the previous steps, while the latter term refers to protocols where a node makes transmission decisions with very no consideration of the transmission history. Even though adaptive protocols are more powerful, oblivious algorithms are much easier to implement and demand minimal processing time for each node.

By taking into account both models, we study the way in which the limitation on the number of transmissions interacts with the time complexity of broadcasting protocols. 

\paragraph{Related work.}

Distributed broadcasting in radio networks of unknown topology with no limitation in the number of shots has been extensively studied in the literature.

The problem was first introduced by Chlamtac and Kutten~\cite{CK85}. Bar-Yehuda, Goldreich and Itai~\cite{BGI87} gave the first randomized protocol, which completes broadcasting in $O(D\log n +\log^2 n)$ expected time when applied to graphs with $n$ nodes and diameter $D$. Several papers followed~\cite{CR03,KP03} that led to a tight upper bound of $O(D\log (n/D) + \log^2 n)$.

As for the deterministic case, a lower bound of $\Omega(n\log n)$ for general networks was given by Brusci and Del Pinto in \cite{BP97}, improved (for small $D$) to $\Omega(n\log D)$ by~Clementi \textsl{et al.}~\cite{CMS03}. Chlebus {\it et al.}~\cite{CGGPR00} gave the first broadcasting protocol of sub-quadratic time complexity $O(n^{11/6})$. This bound was later improved to $O(n^{5/3}\log^3n)$ by De~Marco and Pelc \cite{MP01} and then by Chlebus {\it et al.}~\cite{CGOR00}, who gave an algorithm with time complexity $O(n^{3/2})$ based on finite geometries. Chrobak, G\c{a}sieniec and Rytter~\cite{CGR00} further improved the bound to $O(n\log^2n)$. Finally, De~Marco~\cite{M08} gave the best currently known upper bound of $O(n\log n \log\log n)$, thus leaving a sub-logarithmic gap between the upper and lower bound.

It should be mentioned that all the aforementioned algorithms with time complexity better than $O(n^{3/2})$ are non-constructive. The best constructive bound so far is that of Indyk~\cite{I02} who presented a somewhat slower constructive version of the protocol of~\cite{CGR00}, achieving $O(n\log^{O(1)}n)$ time complexity. It is also noteworthy that all algorithms proposed so far for deterministic distributed broadcasting in directed graphs are oblivious; this seems to be inherently related to the fact that the nodes have no knowledge about the graph topology.

For undirected networks, Chlebus {\it et al.}~\cite{CGGPR00} gave a deterministic $O(n)$-time broadcasting algorithm with spontaneous wake-up. In case the nodes do not use spontaneous wake-up, an optimal $O(n \log n)$-time broadcasting algorithm was presented in~\cite{KP03}.    

As mentioned above, broadcasting with a limited number of shots was first proposed in~\cite{GKKPS08}. It has also been considered in~\cite{KP09}, where randomized algorithms were proposed; in both cases, only broadcasting in known networks was studied. Another approach to limiting the number of shots was presented in~\cite{BCH09}, where the authors construct algorithms which use few shots for each node and achieve nearly optimal broadcasting time. To the best of our knowledge, the present paper is the first that addresses the issue of deterministic $k$-shot broadcasting in general radio networks of unknown topology.

\paragraph{Our contribution.}
In this paper we show (a) a lower bound of $\Omega(n^2 / k)$ on the broadcasting time of any oblivious $k$-shot broadcasting algorithm and (b) an oblivious broadcasting protocol that achieves a matching upper bound, namely $O(n^2 / k)$, for every $k \le \sqrt{n}$ and an upper bound of $O(n^{3/2})$ for every $k > \sqrt{n}$. This bound implies the following tradeoff between time complexity and the number of maximum transmissions per node:

$$ \#\verb|shots| \: \times \: \#\verb|steps|  = \begin{cases} 
\Theta(n^2) & \text{for } k \leq \sqrt{n},\\ 
\Omega(n^2) &  \text{for } k > \sqrt{n}. 
\end{cases} $$

In order to prove the lower bound, we develop a technique which, given an oblivious protocol, constructs a graph (a \emph{collision} graph) which succeeds in blocking the progress of broadcasting. It should also be noted that the lower bound holds even in the case of oblivious broadcasting in symmetric (undirected) networks. 

Our algorithm which matches the lower bound is based on the $O(n^{3/2})$-time algorithm of~\cite{CGOR00}. A very interesting consequence of our results, among others, is that we may impose a $\sqrt{n}$-shot restriction on the algorithm of~\cite{CGOR00} without affecting its performance.

Finally, we study the most general case of adaptive algorithms that are as strong as possible (in the sense that they can make use not only of their own history but also of the complete history of other nodes). Generalizing the lower bound approach of~\cite{BP97}, which refers to the case of any number of shots, we introduce a construction, which we call the \emph{transmission tree}, that allows us to obtain an $\Omega\left(n^{\frac{1+k}{k}}\right)$ lower bound for the $k$-shot case. We manage that, by specifying a correspondence between the height of the transmission trees and the time that the algorithm needs to complete broadcasting.

%%%%%%%%%%%%%%%%%%%%%%%%%%%%%%%%%%%%%%%%%%%%%%%%%%%%%%%%%%%%%%%%%%%%%%%%%%%%
\section{Model and Preliminaries}

We model a radio network as a directed graph. This means that if between two nodes $u$ and $v$ there exists an edge $(u,v)$ but not the opposite edge $(v,u)$, then node $u$ can transmit to node $v$, but not vice versa. Furthermore, we assume that the nodes have unique labels from the set $V = \{1,2, \dotsc,n\}$, where $n$ is the number of nodes in the network. Initially, a node is aware only of its own label and whether it is the source node or not. This means that it has no knowledge, full or partial, about the topology of the underlying graph. We also assume that every node knows the size $n$ of the network. 

We consider protocols under the assumption that a node may transmit only after it has received the source message, i.e.\ there are no \emph{spontaneous transmissions}. Moreover, we assume that the nodes are not capable of \emph{detecting collisions}, that is, if an attempt to transmit to a node $v$ was unsuccessful, then $v$ is not able to sense it.

We say that a broadcasting algorithm (or protocol) completes broadcasting when all nodes of the network have received the source message. The \emph{running time} or \emph{broadcasting time} of an algorithm is defined as the worst-case number of steps needed to complete broadcasting over all possible network configurations (with the same number of nodes).(?)

We now define the notion of \emph{oblivious $k$-shot protocols}. As mentioned earlier, a protocol is oblivious if nodes do not take into account any information that may be gained during the execution of the protocol. Formally, an oblivious protocol can be succintly described as a sequence of \emph{transmission sets}, which are subsets of the node set $V$. Once a node receives the message at step $t$, it wakes up and transmits at the first $k$ steps after $t$ in which it appears in the transmission set.(?unrestricted broadcasting) This model captures an important class of broadcasting algorithms, since most known algorithms for deterministic broadcasting in networks with unknown topology fall into this class. 

Finally, we introduce some useful notation. We refer to a path graph $S$ as a \emph{chain} and denote by $V(S)$ the set of its nodes. For simplicity, we denote by $|S|$ the number of nodes in $S$. We also say that a graph $G$ \textit{starts} with a chain $S$ when $S$ is a subgraph of $G$, no node in $S$ but the last is connected to nodes in $V \setminus V(S)$ and the source is the first node of the chain. We define the concatenation of two chains $S_1$ and $S_2$, denoted by $S_1 \circ S_2$, as the graph consisting of $S_1$ and $S_2$ with the last node of $S_1$ connected to the first node of $S_2$, and no other edge between $S_1$ and $S_2$. We will also denote by $S \circ w$ the concatenation of chain $S$ with the chain consisting of a single node $w$.

%%%%%%%%%%%%%%%%%%%%%%%%%%%%%%%%%%%%%%%%%%%%%%%%%%%%%%%%%%%%%%%%%%%%%%%%%
\section{A Lower Bound for Oblivious $k$-shot Broadcasting} \label{sec:LBk}

In this section we prove an $\Omega(n^2/k)$ lower bound for any oblivious $k$-shot broadcasting protocol. We first need to introduce some more notation. 

We denote by $\mathtt{shots}(v,T)$ the minimum between $k$ and the number of times $v$ appears in a transmission set after step $T$. Note that  $\mathtt{shots}(v,T)$ is the number of times node $v$ will transmit if it receives the message at step $T$. Let also $t_i(v,T)$, where $i \leq \mathtt{shots}(v,T)$, be the step where node $v$ appears for the $i$-th time in a transmission set after step $T$. Moreover, we define
$$ t_{\leq i}(v,T)=\begin{cases} 
t_i(v,T) & \text{if } i \leq \mathtt{shots}(v,T),\\ 
t_{\mathtt{shots}(v,T)}(v,T) & \text{otherwise}. 
\end{cases} 
$$

We say that a sequence of nodes $S = \langle v_1, \dotsc, v_{|S|}\rangle$ \emph{occurs} in a schedule $\mathcal{S}$ if there is a subsequence $\mathcal{S}' = \langle \mathcal{T}_1, \dotsc, \mathcal{T}_{|S|}\rangle$ of transmission sets in $\mathcal{S}$ such that for all $i=1,\dotsc, |S|$, it holds that $v_i \in \mathcal{T}_i$. The first occurrence of sequence $S$ after some step $T$ is defined in a similar way, where in addition we ask for $\mathcal{T}_1$ to appear after step $T$ and for $\mathcal{T}_{|S|}$ to appear as early as possible in the schedule. We denote by $t_1(S,T)$ the step where $ \mathcal{T}_{|S|}$ appears.  

Now, let us consider an oblivious $k$-shot broadcasting protocol $\mathcal{P}$. 

\begin{lem} \label{confl_sub}
Consider a sequence $S$ and let $T = t_1(S,0)$. Consider also any set $Q  \subseteq V \setminus V(S)$. Then, there exists a node $w \in Q$ such that $$t_{\leq k}(w,T) \geq T+|Q|-1$$
\end{lem}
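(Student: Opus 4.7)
The plan is to prove this by induction on $|Q|$. The base case $|Q| = 1$ is immediate because $t_{\leq k}(w, T) > T = T + |Q| - 1$ by definition of $t_1$.

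For the inductive step, I would pick $w^* \in Q$ minimising $t_1(w^*, T)$ and set $T^* := t_1(w^*, T) \geq T + 1$. A brief greedy argument shows $t_1(S \circ w^*, 0) = T^*$: the first occurrence of $S$ ends at step $T$, and by the definition of $T^*$ the earliest slot after $T$ in which $w^*$ appears is $T^*$, so this is the earliest completion of $S \circ w^*$. Applying the induction hypothesis to the extended sequence $S \circ w^*$ together with the set $Q \setminus \{w^*\}$ (which has size $|Q|-1$ and is disjoint from $V(S \circ w^*)$) yields some $w' \in Q \setminus \{w^*\}$ with
$$t_{\leq k}(w', T^*) \;\geq\; T^* + (|Q|-1) - 1 \;\geq\; T + |Q| - 1.$$

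What remains is to transfer this bound from reference step $T^*$ back to reference step $T$. By the minimality of $T^*$, $w'$ has no appearance in the open window $(T, T^*)$, so the multiset of appearances of $w'$ after $T$ coincides with its appearances after $T^*$, except possibly for a single extra occurrence at step $T^*$ itself. If no extra occurrence is present, or if $\mathtt{shots}(w', T) < k$ (so that $t_{\leq k}(\cdot)$ is simply the last appearance and is insensitive to the shift of reference step), one verifies directly that $t_{\leq k}(w', T) = t_{\leq k}(w', T^*)$ and the conclusion follows.

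I expect the principal obstacle to be the remaining subcase, in which $w'$ appears exactly at $T^*$ and also has at least $k$ further appearances after $T^*$. Here the $k$-th appearance of $w'$ counted from $T$ lies one slot earlier than the one counted from $T^*$, and the inductive bound seemingly loses one step. I plan to close this gap either by strengthening the inductive invariant to additionally control $t_{\leq k-1}(w', T^*)$ (which, under this shift, coincides with $t_{\leq k}(w', T)$ and enjoys the same lower bound by the argument above), or by combining the IH with the structural fact that $w'$'s $k$ useful transmissions occupy $k$ distinct steps of the schedule, which provides the single slot of slack needed to absorb the loss.
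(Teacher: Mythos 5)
Your induction is set up correctly up to the point you yourself flag: the identification $t_1(S\circ w^*,0)=T^*$ and the application of the inductive hypothesis to $S\circ w^*$ and $Q\setminus\{w^*\}$ are both fine. But the acknowledged gap is genuine, and neither of your proposed repairs closes it. Concerning repair (b): in the bad subcase, where $w'$ lies in the transmission set at step $T^*$ and has at least $k$ further appearances afterwards, one gets $t_{\leq k}(w',T)=t_{k-1}(w',T^*)$, which is one transmission \emph{index} earlier than $t_k(w',T^*)$, not one time \emph{slot} earlier; consecutive appearances of $w'$ in the schedule can be arbitrarily far apart, so the loss is unbounded and no ``single slot of slack'' absorbs it. Concerning repair (a): the strengthened invariant you would need, namely that some $w$ satisfies $t_{\leq k-1}(w,T)\geq T+|Q|-1$, is simply false for $k$-shot schedules. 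Take $k=2$, $Q=\{u_1,u_2,u_3\}$, and a schedule that puts all of $Q$ in the transmission set at step $T+1$ and then each $u_i$ alone at steps $T+2,T+3,T+4$: every $t_{\leq 1}(u_i,T)$ equals $T+1<T+2$, although the lemma itself holds. Worse, the witness returned by the recursion may coincide with the transmission set at several of the chaining steps $T^*_1<T^*_2<\cdots$, so honest bookkeeping would have to lower the index by one at each such level; over up to $|Q|$ levels the index can be driven to zero, and no uniform strengthening of the form $t_{\leq k-j}$ with the same right-hand side survives.

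The paper avoids this reference-point problem entirely by arguing globally rather than by chaining occurrences. It builds a bipartite graph between the nodes of $Q'=Q\setminus\{v_t\}$ and the steps in $\{T+1,\dotsc,L\}$ at which they transmit, uses an adversarial collision graph (attaching a sink $v_t$ to a putative ``conflicting subgraph'') to show that every node of $Q'$ must transmit alone among $Q'$ at some step within its first $k$ shots after $T$, and then repeatedly peels degree-one step-vertices to assign each of the $|Q|-1$ nodes a distinct such step, forcing $L\geq T+|Q|-1$. If you want to retain an inductive structure, you would have to carry precisely this per-node count of consumed shots in the invariant, at which point you have essentially reconstructed the paper's argument.
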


\begin{proof}
We first define a bipartite graph $G = (A,B,E)$ as follows. The upper set $A$ corresponds to the nodes in the set $Q' = Q \setminus \{v_t\}$, where $v_t$ is an arbitrary node in $Q$. Let $L = \max_{u \in Q'} t_{\leq k}(u,T)$, that is, the last step in which some node from $Q'$ transmits if it receives the message at step $T$. We set node $w$ to be the node that maximizes $L$. %For every node $w \in Q'$, consider the transmission steps $ t_1(w,T), t_2(w,T), \dotsc, t_{\mathtt{shots}(w,T)}(w,T)$, which correspond to the steps $w$ transmits if it receives the message at step $T$. 
The lower set $B$ corresponds to steps from $\{ T+1, \dotsc, L \}$ in which some node from $Q'$ transmits. If node $u$ transmits at step $t$, we add an edge between $u \in A$ and $t \in B$. 

We say that an induced subgraph $H$ of $G$ is \textit{conflicting} if for any node $w \in V(H)$ the following two properties hold:
\begin{enumerate}
\item If $w \in A$, then all neighbors of $w$ in $G$ also belong to $V(H)$
\item If $w \in B$, then $\deg_H(w)> 1$ 
\end{enumerate}
  
Note that in $H$, no vertex of $B$ has only one neighbour, which means that all transmission sets included in $H$ contain at least two nodes. 

Let us now state and prove the following property: \textit{$G$ contains no conflicting subgraphs}. 

Indeed, suppose that $G$ has some conflicting subgraph $H = (A',B',E')$. Consider the graph $G_t$ with the following topology: graph $G_t$ starts with the chain corresponding to $S$ and the last node $v_S$ of $S$ is connected to all nodes in $A'$. Moreover, $v_t$  is connected to every node in $A'$ and has no other neighbors. At step $T$, $v_S$ transmits and all nodes in $A'$ get the message. However, nodes in $A'$ transmit only at steps in $B'$, according to the first property of $H$. Since every step in $B'$ has at least two neighboring nodes in $A'$, the corresponding transmission set contains at least two nodes possessing the source message and therefore  a conflict occurs at every such step. Moreover, by the end of step $L$, every node in $A'$ either has transmitted $k$ times or has no more transmissions available. Thus, $v_t$ never gets the message, which is a contradiction.    
  
Based on this property, it is easy to see that $G$ has at least one vertex $u \in B$  with $deg_G(u)=1$ (otherwise $G$ would be a conflicting subgraph itself). Suppose now that we remove $u$ along with its \textit{only} neighbor to obtain graph $G'$. Notice that $G'$ is an induced subgraph of $G$ and for any node $u \in A \cap G'$, all its neighbors belong to $V(G')$. Consequently, $G'$ cannot be a conflicting subgraph, thus there exists some vertex $u' \in B$ such that $deg_{G'}(u')=1$. This process may continue $|Q|-1$ times (since each time we remove at most one node from $A$), until all nodes in $A$ are chosen. Intuitively, this process maps each node to a unique step in $B$. Thus, $|B| \geq |Q|-1$, therefore $L \geq T+|Q|-1$.          
\end{proof}

\begin{lem} \label{sub_k}
Let us fix the first occurrence of sequence $S$ at step $T$. Then, there exists a sequence $R$ of length at most $k$ such that $t(R,T) \geq T+n-|S|-k$.
\end{lem}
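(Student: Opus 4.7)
My plan is to derive Lemma~\ref{sub_k} from a single, direct invocation of Lemma~\ref{confl_sub}, taking the desired sequence $R$ to consist of repeated copies of one cleverly chosen node rather than of distinct nodes. The lemma statement permits $R$ to be an arbitrary sequence of length at most $k$, so repetitions are allowed; the definition of ``$R$ occurs'' still forces $|R|$ \emph{distinct} transmission sets to be used, and this is exactly what lets the multiplicity pay off.

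Concretely, I would apply Lemma~\ref{confl_sub} to the given sequence $S$ (whose first occurrence ends at step $T$) with $Q := V \setminus V(S)$, which has size $n - |S|$. The lemma returns some $w \in V \setminus V(S)$ with
\[
t_{\leq k}(w, T) \;\geq\; T + (n - |S|) - 1 .
\]
Unfolding the definition, $t_{\leq k}(w, T) = t_j(w, T)$, where $j := \min\{k, \mathtt{shots}(w, T)\}$; that is, $t_{\leq k}(w,T)$ is the step at which $w$ makes its $j$-th appearance in a transmission set after $T$.

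I would then set $R := \langle w, w, \dotsc, w \rangle$ with exactly $j$ copies of $w$, so that $|R| = j \le k$. Because the definition of ``$R$ occurs'' requires $|R|$ distinct transmission sets each containing $w$ in order, the first occurrence of $R$ after step $T$ necessarily ends at $w$'s $j$-th appearance after $T$. Hence
\[
t(R, T) \;=\; t_j(w, T) \;=\; t_{\leq k}(w, T) \;\geq\; T + n - |S| - 1 \;\geq\; T + n - |S| - k,
\]
where the last inequality uses $k \ge 1$. The slack $-k$ in the lemma's statement is therefore generous.

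I do not expect any real obstacle; the only point worth checking is the degenerate case $\mathtt{shots}(w, T) = 0$ (i.e., $w$ never transmits after step $T$). This is harmless under the natural convention $t_0(w, T) = +\infty$, and in fact in such a situation Lemma~\ref{confl_sub}'s conclusion is vacuously satisfied, so we may as well take $R$ to be the empty sequence. All remaining steps are simply unpacking the notation introduced right before the lemma, so the argument stays essentially one line past Lemma~\ref{confl_sub}.
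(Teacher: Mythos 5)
There is a genuine gap here, and it lies in the very first move: reading ``sequence of length at most $k$'' as licensing repeated copies of one node. Your computation $t(\langle w,\dots,w\rangle,T)=t_j(w,T)=t_{\leq k}(w,T)$ is correct under a purely literal reading of the ``occurs'' definition, but Lemma~\ref{sub_k} is not a free-standing combinatorial statement about the schedule; it is the inductive step of Theorem~\ref{kshot}, where $R$ is concatenated onto $S$ and the resulting sequence is realized as the chain $S_{\mathcal P}$, i.e.\ a path graph on \emph{distinct} vertices drawn from $V\setminus V(S)$ (this is also why Lemma~\ref{confl_sub} is stated for $Q\subseteq V\setminus V(S)$, and why the iteration in Theorem~\ref{kshot} gains $n-|S|-k$ per round while $|S|$ grows by up to $k$ \emph{new} nodes). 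A sequence consisting of $j$ copies of $w$ corresponds to no graph: the only realizable version is the chain $S\circ w$, and in that graph $w$ forwards the message at $t_1(w,T)$, its \emph{first} appearance after $T$ --- a quantity about which Lemma~\ref{confl_sub} says nothing (it only controls $t_{\leq k}(w,T)$, the $k$-th/last appearance, which may be far later than the first). So your $R$ certifies a large value of the formal quantity $t(R,T)$ without certifying any delay in any network, which is what the lemma must deliver.

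The missing idea is exactly the content of the paper's proof: one must manufacture a sequence of $k$ \emph{distinct} nodes whose \emph{ordered} first occurrence is late, because in a chain each node's relevant transmission is forced to come after its predecessor's. The paper does this in two steps: (i) $k$ nested applications of Lemma~\ref{confl_sub} to $Q, Q\setminus\{w_1\},\dots$ produce a set $W=\{w_1,\dots,w_k\}$ with $t_{\leq k}(w,T)\geq T+|Q|-k$ for every $w\in W$ (note the weakened bound $-k$ rather than your $-1$, which is where the slack in the statement actually comes from); (ii) the $r_i$ are then chosen from $W$ by maximizing $M_i=\max_v t_{\leq i}(v,T)$, and an induction shows $t(\langle r_1\cdots r_i\rangle,T)=\max_{j\leq i}M_j$, so that the chain's delay is pinned to $t_{\leq k}(q,T)$ for some $q\in W$. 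Neither step can be bypassed by a single invocation of Lemma~\ref{confl_sub}.
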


\begin{proof}
Let $Q= V \setminus V(S)$. We first construct a set $W = \{w_1, w_2, \dotsc, w_k \} \subseteq Q$ of size $k$ as follows. We apply lemma \ref{confl_sub} to $Q$; thus, there exists a node $w_1$ such that $t_{\leq k}(w_1,T) \geq T+|Q|-1$. Generally, for any $i \leq k$, we apply lemma \ref{confl_sub} to the set $Q \backslash \{w_1, \dotsc, w_{i-1}\}$ to obtain a node $w_i$ such that $t_{\leq k}(w_i,T) \geq T+|Q|-i$. Thus, the set $W$ we obtain has the following property: for any $w \in W$, it holds that $t_{\leq k}(w,T) \geq T+|Q|-k$.

Next, we use the nodes from $W$ to construct a sequence $R = \langle r_1 r_2 \dotsc r_k \rangle$ of length $k$. We will assume that any symbol of $R$ may be the empty symbol, which we denote by $\varepsilon$. The construction will be as follows: in order to compute $r_i$, we calculate the value $M_i = \max_{v \in W \backslash \{ r_1,\dotsc,r_{i-1}\}} \{ t_{\leq i}(v,T) \}$. If $M_i \leq \max_{j<i}\{M_j\}$, then we set $r_i = \varepsilon$. Otherwise, $r_i$ is the node which maximizes $M_i$. Intuitively, we find the node which transmits last for the $i$-th time. 

We will show that $t(R,T) \geq \max_i \{M_i\}$. In order to prove this, we will prove by induction that $t(\langle r_1r_2\dotsc r_i\rangle,T) = \max_{j \leq i} \{M_j\}$. Here, let us note that, since we construct a chain of nodes, nodes other than the first node will receive the message after step $T$ and thus their transmissions will not coincide with the transmissions that would occur in the case that all nodes receive the message at $T$. However, the assumption that all nodes behave as if they receive the message at $T$ only favors the progress of broadcasting and thus is a lower bound of the time needed for the message to reach the last node of the chain. We will therefore use this assumption to simplify the proof.

For the induction base, we note that $r_1$ transmits first at step $M_1$, thus the proposition trivially holds. Now, let us examine the sequence $\langle r_1r_2\dotsc r_{i+1} \rangle$. By the induction hypothesis, we know that $t(\langle r_1 \dotsc r_i\rangle,T) \geq \max_{j \leq i} \{M_j\}$. 
We now distinguish two cases.
\begin{itemize}
\item $r_{i+1} = \varepsilon$: Then, we have $t(\langle r_1 \dotsc r_{i+1} \rangle,T) = t(\langle r_1 \dotsc r_i\rangle,T)$. Moreover, due to the construction, it holds that  $M_{i+1} \leq \max_{j<i+1}\{M_j\}$. Thus, $\max_{j \leq i} \{M_j\} = \max_{j \leq i+1} \{M_j\} $ and, using the induction hypothesis, the proposition holds.  
\item $r_{i+1} \neq \varepsilon$: In this case, it holds that $t_{i+1}(r_{i+1},T)=M_{i+1}$ and that $M_{i+1} > \max_{j<i+1}\{M_j\}$.  Thus, $\max_{j\leq i+1}\{M_j\} = M_{i+1}$. Furthermore, the previous $i$ occurrences of $r_{i+1}$ are not after step $\max_{j \leq i}\{M_j\}$ by construction. Consequently, $t(\langle r_1 \dotsc r_{i+1}\rangle,T) = M_{i+1} = \max_{j\leq i+1}\{M_j\}$. 

\end{itemize}

Now, let us consider the last node $r_j$ of $R$ such that $r_j \neq \varepsilon$. Clearly, it holds that $\forall i \neq j: M_j \geq M_i$ and that $t(R,T) = \max_i \{M_i\} = M_j = t_{\leq j}(r_j,T)$. In the case where $j=k$, we have that $t_{\leq j}(r_j,T) = t_{\leq k}(r_k,T)$. Otherwise, since $r_i = \varepsilon$ for $i>j$, any node in $\{w_{j+1}, \dotsc, w_k \}$ does not occur after step $t_{\leq j}(r_j,T)$, thus $t_{\leq j}(r_j,T) \geq t_{\leq k}(p,T)$ for any $p \in \{ w_{i+1}, \dotsc, w_{k}\}$. 
 
In any case, there exists a node $q \in W$ such that $t(R,T)  \geq t_{\leq k}(q,T)$. However, $t_{\leq k}(q,T) \geq T+|Q|-k$ using the property of set $W$. Consequently, $t(R,T) \geq T+|Q|-k = T+n-|s|-k$.
\end{proof}

The proof of the lower bound uses repeatedly lemma \ref{sub_k}. 

\begin{thm} \label{kshot}
For any oblivious $k$-shot broadcasting protocol $\mathcal{P}$, there exists a graph $S_{\mathcal{P}}$ where $\mathcal{P}$ needs $\Omega(n^2/k)$ steps to complete broadcasting.  
\end{thm}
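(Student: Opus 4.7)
The plan is to construct $S_{\mathcal{P}}$ by iteratively growing a chain via repeated applications of Lemma~\ref{sub_k} and summing the resulting delays. Start with the trivial chain $S_0$ consisting of only the source, and set $T_0 = 0$. At iteration $i \geq 1$, apply Lemma~\ref{sub_k} to $S_{i-1}$ with first-occurrence time $T_{i-1}$; this yields a sequence $R_i$ of length at most $k$ whose non-$\varepsilon$ entries are drawn from $V \setminus V(S_{i-1})$ and which satisfies $t(R_i, T_{i-1}) \geq T_{i-1} + n - |S_{i-1}| - k$. Obtain $S_i$ by concatenating $S_{i-1}$ with the chain formed by the non-$\varepsilon$ entries of $R_i$ (in order), and set $T_i = t(R_i, T_{i-1})$. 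An easy induction shows that $S_i$ is a chain of distinct nodes with $|S_i| \leq 1 + ik$.

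Next, I would iterate for $m = \lfloor n/(4k) \rfloor$ rounds. For this choice (and assuming $k \leq n/4$; otherwise the bound $\Omega(n^2/k) = O(n)$ is already forced by the diameter of any simple $n$-node path), $|S_i|$ stays below $n/4$ throughout, so the precondition $|V \setminus V(S_{i-1})| \geq k$ of Lemma~\ref{sub_k} is preserved at every step. Telescoping the per-round increments yields
\[
T_m \;\geq\; \sum_{i=0}^{m-1}\bigl(n - |S_i| - k\bigr) \;\geq\; m(n - 1 - k) \;-\; \frac{km(m-1)}{2} \;=\; \Omega\!\left(\frac{n^2}{k}\right).
\]
Finally, define $S_{\mathcal{P}}$ to be $S_m$ padded with any remaining vertices attached as pendants to the tail of $S_m$ so that $|V(S_{\mathcal{P}})| = n$; since the message cannot reach the tail of $S_m$ before step $T_m$, protocol $\mathcal{P}$ needs $\Omega(n^2/k)$ steps on $S_{\mathcal{P}}$.

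The main obstacle is justifying that the per-round bound of Lemma~\ref{sub_k}, proved under the idealized assumption that every candidate node of $R_i$ wakes up exactly at $T_{i-1}$, still transfers to the physical execution on the composite chain $S_m$, where the non-$\varepsilon$ entries of $R_i$ in fact receive the source message strictly after $T_{i-1}$ and hence have their $k$-shot transmission windows shifted forward. The resolution, noted already in the remark inside the proof of Lemma~\ref{sub_k}, is that any such shift only postpones a node's transmissions and therefore only makes broadcast slower; iterating this observation across all rounds confirms that $T_m$ remains a bona fide lower bound on the actual completion time, so the telescoped sum above is indeed an honest lower bound for $\mathcal{P}$ on $S_{\mathcal{P}}$.
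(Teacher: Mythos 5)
Your proposal is correct and follows essentially the same route as the paper: both iterate Lemma~\ref{sub_k} to extend the chain by at most $k$ nodes per round while forcing a delay of roughly $n-|S|-k$ steps, and then telescope the delays to obtain $\Omega(n^2/k)$. Your version is somewhat more careful than the paper's terse argument (explicit stopping point $m=\lfloor n/(4k)\rfloor$ to preserve the lemma's precondition, padding to $n$ vertices, and the remark that the wake-up time shift only slows broadcasting, which the paper handles inside the proof of Lemma~\ref{sub_k}), but the underlying construction is identical.
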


\begin{proof}
Consider the source $a_0$. Since the sequence $\langle a_0 \rangle$ first occurs at step $1$, from lemma \ref{sub_k}, there exists a sequence $a_1$ of length at most $k$ such that $T_1 = t(\langle a_0\rangle \circ a_1) \geq 1+n-k$. 

Now, we may apply lemma \ref{sub_k} to $a' = \langle a_0\rangle \circ a_1$ so as to find a sequence $a_3$, where $T_2 = t(\langle a_0 \rangle \circ a_1 \circ a_2) \geq T_1+n-2k$.  We can continue this process until we have constructed a sequence of length $k$ or less. Thus, we can construct a sequence $S$ of nodes which occurs for the first time at step

$$1+ \sum_{j=1}^{\lfloor \frac{n}{k} \rfloor} (n-jk) + \Omega(1) $$    
The chain $S_{\mathcal{P}}$ corresponding to sequence $S$ is our graph. 
\end{proof}

%%%%%%%%%%%%%%%%%%%%%%%%%%%%%%%%%%%%%%%%%%%%%%%%%%%%%%%%%%%%%%%%%%%%%%%%%%%%%%%
\section{An Oblivious Algorithm for $k$-shot Broadcasting}

We will present an oblivious algorithm (\textsc{Oblivious k-Shot}) which is an adaptation of the algorithm presented in  \cite{CGOR00} and performs optimal $k$-shot broadcasting in time $O(n^2/k)$ for any $k \leq \sqrt{n}$. For $k \geq \sqrt{n}$, the algorithm completes broadcasting in $O(n^{3/2})$ steps, which matches the time performance of the algorithm in \cite{CGOR00}.

Let $p$ be the smallest prime greater than or equal to $\lceil \sqrt{n} \rceil$. We map a node with label $i$ to the point $\langle i \; \text{div} \; p, i \bmod p\rangle$. A \textit{line} $L_{a,b}$ with \textit{direction} $a$ ($a=0, \dotsc, p$) and \textit{offset} $b$ ($b=0, \dotsc, p-1$) is defined as the following set of points:

$$ L_{a,b}=\begin{cases} 
\{ \langle x,y \rangle : x \equiv b \pmod p \}& \text{if $a = p$},\\ 
\{ \langle x,y \rangle : y \equiv a \cdot x + b  \pmod p \} & \text{else}. 
\end{cases} 
$$

It is easy to observe that the sets defined have the following useful properties, which will be crucial in analyzing the running time of the algorithm.
\begin{itemize}
  \item The total number of distinct lines is $p \cdot (p+1)$.
  \item Each node belongs to $p+1$ lines, one in each direction.
  \item There are $p$ disjoint lines in each direction.
  \item Two lines of different directions have exactly one common node.
  \item For any two different nodes, there is exactly one line that contains both of them.  
\end{itemize}

The algorithm multiplexes two different procedures, the classic \textsc{Round-Robin} procedure where the nodes transmit one after the other alone, and the \textsc{Line-Transmit} procedure, where lines are used as transmission sets.

\begin{procedure}[H] 
\SetLine 
\Repeat{}{
\For{ $v=1,2, \dotsc, p^2$}
{  node $v$ transmits
}}
\caption{\textsc{Round-Robin}()} 
\end{procedure} 

\begin{procedure}[H] 
\SetLine 
\dontprintsemicolon
\Repeat{}{
\For{$a=0, \dotsc, p$}
{ \textsc{Stage} \;
\For{$b=0, \dotsc, p-1$} { 
	all nodes in $L_{a,b}$ transmit
}}}
\caption{\textsc{Line-Transmit}()} 
\end{procedure} 

We define $K = \lceil \frac{p}{k-2}\rceil$ and the procedures are multiplexed such that a step of the \textsc{Line-Transmit} is followed by $K$ steps of the \textsc{Round-Robin} procedure.

\begin{algorithm}[H] 
\SetLine 
\Repeat{}
{ perform a \textsc{Line-Transmit} step \;
  perform $K$ \textsc{Round-Robin} steps \;
}
\caption{\textsc{Oblivious k-Shot}} 
\end{algorithm}

Before we analyze the time complexity of the algorithm, we need to examine whether the \textsc{Oblivious K-Shot} algorithm is valid. In order to show this, we will prove that every node transmits at least one time alone.
\begin{lem}[\textsc{Validity}] \label{alone}
In \textsc{Oblivious k-Shot}, every node transmits at least once as the only node in the network. 
\end{lem}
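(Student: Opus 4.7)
The plan is to show that the first $k$ transmission sets containing $v$ (counted from the moment $v$ wakes up) cannot all be \textsc{Line-Transmit} sets; since the only alternative is a \textsc{Round-Robin} step, in which $v$ is by definition the sole member of the transmission set, this establishes the lemma.

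I would begin by recording two structural facts about \textsc{Oblivious k-Shot}. First, because the $p$ lines of any fixed direction partition the $p \times p$ grid, $v$ belongs to exactly one line per direction, and hence appears in exactly one \textsc{Line-Transmit} set per stage; writing $b_a \in \{0,\dotsc,p-1\}$ for the offset of $v$'s line in direction $a$, $v$'s LT appearance in stage $a$ is the $(ap + b_a + 1)$-th LT step globally. Second, the \textsc{Round-Robin} counter advances by $1$ on every RR step (regardless of the interleaved LT steps), so in any $p^2$ consecutive RR steps it cycles once through every label in $\{1,\dotsc,p^2\}$.

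Suppose for contradiction that all of $v$'s first $k$ appearances after waking are LT appearances, occurring in $k$ consecutive stages at LT step numbers $N_1 < \dotsb < N_k$. By the first fact, combined with $b_{a_{k-1}} - b_{a_0} \geq -(p-1)$,
\[
N_k - N_1 \;=\; (k-1)p + b_{a_{k-1}} - b_{a_0} \;\geq\; (k-2)p + 1.
\]
Because the outer loop interleaves one LT step with $K$ RR steps, the number of RR steps strictly between the $N_1$-th and $N_k$-th LT steps equals $(N_k - N_1)\cdot K$, which by $K \geq p/(k-2)$ is at least $\bigl((k-2)p + 1\bigr) \cdot p/(k-2) = p^2 + p/(k-2) > p^2$. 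By the second fact, the RR counter must hit $v$'s label at some point inside this window, yielding an RR appearance of $v$ strictly between its first and $k$-th LT appearances and contradicting the assumption.

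The main technical hurdle is keeping the interleaving arithmetic honest and noticing that, although the offsets $b_a$ are fixed by $v$'s grid position (not adversarial), they can still realize the worst-case configuration $b_{a_0} = p-1$, $b_{a_{k-1}} = 0$; this is what forces the tight bound $N_k - N_1 \geq (k-2)p + 1$, and in turn explains the exact choice $K = \lceil p/(k-2) \rceil$ as the minimum value for which the RR window between $v$'s first and $k$-th LT appearances is guaranteed to span at least one full revolution of the round-robin counter.
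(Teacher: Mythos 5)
Your proof is correct and follows essentially the same route as the paper's: assume for contradiction that all $k$ of $v$'s first post-wake-up appearances are \textsc{Line-Transmit} appearances (necessarily in $k$ consecutive stages), count the interleaved \textsc{Round-Robin} steps in that window as at least $\bigl((k-2)p+1\bigr)K > p^2$, and conclude that the round-robin counter must hit $v$'s label, a contradiction. Your version merely makes the offset arithmetic behind the bound $N_k - N_1 \ge (k-2)p+1$ explicit, which the paper asserts without derivation.
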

\begin{proof}
Assume that node $v$ receives the message at some step $t$. We will show that $v$ transmits at least once at a step of the \textsc{Round-Robin} procedure. In order to obtain a contradiction, let us assume that $v$ transmits all $k$ times during the \textsc{Line-Transmit} procedure. Since each node transmits exactly once during a stage, this means that $v$ transmits at $k$ consecutive stages. 

We will now compute the number of \textsc{Round-Robin} steps during these $k$ stages. Note that each stage has $p$ line transmissions and after each line transmission occur $K$ \textsc{Round-Robin} steps. Thus, we have at least $( p \cdot (k-2)+1 ) \cdot K > p \cdot (k-2) \cdot  \lceil \frac{p}{k-2}\rceil > p^2$ \textsc{Round-Robin} steps. Clearly, this means that $v$ would appear at least once at a \textsc{Round-Robin} step during this period, which yields a contradiction.  
\end{proof}
 
 Let us now examine the time complexity of this broadcasting protocol. We say that we make \textit{progress} each time a node receives the message or when all its neighbors have received the message. Achieving progress $2n-1$ means that every node has received the message and thus broadcasting is complete. 

We next prove a useful lemma. We denote by $r$ a constant $r < k$ which we will fix later so as to optimize the analysis.

\begin{lem} \label{prog}
Let $F$ be the active nodes at the beginning of stage $t$. If $|F| \leq k/r$, then the average progress per stage during the next $r \cdot |F|$ stages is constant. 
\end{lem}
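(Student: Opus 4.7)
The plan is to lower-bound the total progress over the window of $r|F|$ stages (write $m = |F|$) by $\Omega(rm)$, which is exactly what constant average progress per stage means. The central ingredient I will exploit is the ``one common line'' property of the finite-geometry construction: any two distinct nodes share exactly one line among the $p+1$ directions. Since $m \leq k/r$ implies $rm \leq k \leq \sqrt{n} \leq p$, the $rm$ consecutive stages of the window correspond to $rm$ distinct directions. Therefore, for any fixed $v \in F$, the number of stages in the window in which some other $F$-node shares $v$'s line is at most $|F|-1 = m-1$, so $v$ enjoys at least $(r-1)m+1$ ``lonely'' stages --- stages in which $v$ is the only $F$-node on its line, so that every currently uninformed neighbor of $v$ receives the message (no $F$-collision is possible at any such neighbor during that step).

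Using this, I would first argue that for every $v \in F$, its first lonely stage occurs within the initial $m$ stages of the window, informs all of $v$'s currently uninformed neighbors (one type-1 event per such neighbor), and renders $v$ ``done'' (one type-2 event). This already accounts for $m$ type-2 events plus at least $|N(F) \cap U|$ type-1 events. I then split into two cases. If at least $(r-1)m$ nodes become informed during the window, type-1 progress alone is $\geq (r-1)m$ and combined with the $m$ type-2 events we obtain $\geq rm$ total progress. Otherwise, the active set never exceeds $m + (r-1)m = rm$ throughout the window, so for every node $w$ that joins $F$ mid-window, the bad-stage count relative to the time-varying active set is at most $rm-1$; provided $r$ is chosen as a sufficiently large constant, $w$ has enough stages remaining after its arrival for at least one lonely transmission, which in turn informs $w$'s uninformed neighbors and makes $w$ done. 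Cascading this across the $r$ sub-windows of $m$ stages each propagates the message BFS-wise from $F$ and accumulates one type-1 event per newly informed node plus one type-2 event per node becoming done.

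The main obstacle I anticipate is the careful bookkeeping in the small-growth case: one must verify that a node $w$ joining $F$ at an intermediate stage still has room in the remainder of the window for its own first lonely stage, and that $w$'s type-2 event (the moment its last uninformed neighbor is informed) actually falls inside the window rather than spilling past it. I would address this by fixing $r$ as a small constant (so that the constraints $|F| \leq k/r$ and ``each newcomer has $\geq m$ stages remaining'' can be simultaneously satisfied) and by an induction on the BFS levels from $F$, showing that the contributions at each level sum to a constant fraction of $m$ and hence to $\Omega(rm)$ total progress over the $r$ rounds.
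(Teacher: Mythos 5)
There is a genuine gap, and it is the central difficulty of the lemma: your argument ignores the $k$-shot budget. You count, for each $v \in F$, the stages of the window in which no other active node shares $v$'s line, and conclude that at such a ``lonely'' stage every uninformed neighbor of $v$ receives the message. But this presumes $v$ actually transmits at that stage. In \textsc{Oblivious k-Shot} a node transmits only at its first $k$ appearances in a transmission set after waking up, and these appearances are consumed greedily: one per stage of \textsc{Line-Transmit} plus the interleaved \textsc{Round-Robin} slots. A node that woke up well before stage $t$ may have few shots left, and whatever shots remain are forced onto its \emph{earliest} upcoming appearances --- it cannot ``save'' a shot for the lonely stage your counting argument identifies. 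So the claim that every $v\in F$ gets a successful solo transmission within the first $m$ stages (and, in your second case, that every mid-window arrival gets one before the window ends) is unsupported; indeed the possibility that some active node \emph{never} transmits alone during the window is exactly why the paper's proof has a second case at all.

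The paper's proof closes this hole with two ideas you are missing. First, it bounds the number of shots an active node $v$ can have spent before stage $t$: each such transmission must have collided at $v$'s still-uninformed neighbor $w$ with a \emph{distinct} partner (distinct because any two nodes share exactly one line, the property you also use), and all those partners are still active at stage $t$, so at most $|F|-1 < k/r$ shots are gone and $v$ has $k-k/r$ transmissions left inside the window. Second --- and this is the key reversal --- when $v$ never transmits alone, the paper does not try to contradict this; it harvests progress from the collisions themselves: each of the $k-k/r$ in-window collisions involves a distinct partner that is transmitting and hence newly informed, yielding progress $k-k/r$ over $r|F|\le k$ stages. Your instinct to track collisions only against the (time-varying) active set and to worry about mid-window arrivals is sound but secondary; without an accounting of remaining shots and without a mechanism that turns persistent collisions into progress, the argument does not go through.
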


\begin{proof}
Let us consider the consecutive stages $T = \{ t, t+1, \dotsc, t+ r \cdot |F|-1\}$. If every node in $F$ broadcasts during $T$ at least once as the only node, then the progress is at least $|F|$ and thus the average progress per stage is at least $\frac{|F|}{r \cdot |F|}=\frac{1}{r}$. 

Otherwise, there exists some node $v \in F$  such that whenever $v$ broadcasts in $T$, then some other node broadcasts simultaneously and causes a collision. However, $v$ broadcasts only $k$ times during the odd-numbered steps and it is possible that $v$ has already transmitted before $t$. We will examine the behavior of node $v$ in a window of $k-1 < \sqrt{n} < p+1$ stages where $v$ transmits. Let us assume that $v$ has already transmitted $i$ times before $t$. Since $v$ is an active node, this means that there exists a neighboring node $w$ that has not received the message. Thus, in each of the previous $i$ transmissions, a node neighboring to $w$ had transmitted simultaneously. Moreover, these nodes are all different, since any two nodes broadcast simultaneously only once during a window of less than $p+1$ stages. Note also that the $i$ nodes will be active at the beginning of $T$, since they have a neighboring node ($w$) which has not received the message yet. From the hypothesis that $|F| \leq k/r$, we obtain that $i < k/r$. This means that $v$ transmits during $T$ at least $(k-k/r)$ times. Furthermore, these $i$ nodes will not transmit together with $v$ in the next $k-i-1$ transmissions.

Using a similar argument, and since a collision occurs at each of these $(k-k/r)$ steps, there will be at least $(k-k/r)$ different nodes outside $|F|$ which transmit together with $v$. This implies a progress of at least $(k-k/r)$ and an average progress per stage of at least $\frac{k-k/r}{r \cdot |F|} \geq \frac{k-k/r}{k} = 1-\frac{1}{r}$.

Thus, the average progress per stage is at least $\min \{\frac{1}{r}, 1-\frac{1}{r} \}$, which is a constant.
\end{proof}

It is easy to see that the average progress is maximized when $r=2$. Then, we have that if $|F| \leq k/2$, the average progress per stage is at least $1/2$.

\begin{thm}
\textsc{Oblivious k-Shot} completes $k$-shot broadcasting in $O(n^2/k)$ steps for $k \leq \sqrt{n}$.
\end{thm}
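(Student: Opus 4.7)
The plan is to combine two complementary progress arguments---one from the \textsc{Round-Robin} procedure when many nodes are active, one from Lemma~\ref{prog} when few are---to show that broadcasting completes in $O(n)$ stages, and then to multiply by the cost $O(n/k)$ of a single stage.

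First I would compute that per-stage cost. A stage consists of $p$ \textsc{Line-Transmit} steps, each followed by $K = \lceil p/(k-2)\rceil$ \textsc{Round-Robin} steps, so each stage takes $p(K+1) = O(p^2/k) = O(n/k)$ steps, using $p = \Theta(\sqrt{n})$ and the hypothesis $k \leq \sqrt{n}$. The remainder of the argument is devoted to bounding the number of stages by $O(n)$.

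I would split the execution into consecutive windows of at most $k$ stages each and show that every window contributes $\Omega(1)$ units of progress per stage, where \emph{progress} is defined as in the discussion preceding Lemma~\ref{prog} and is bounded above by $2n-1$. Let $F$ denote the active nodes at the start of a window. If $|F|\leq k/2$, Lemma~\ref{prog} applied with $r=2$ directly gives average progress at least $1/2$ per stage over the next $2|F|\leq k$ stages. Otherwise $|F|>k/2$, and I would argue via \textsc{Round-Robin}: in $k-2$ stages the algorithm performs at least $(k-2)\cdot p\cdot K \geq p^2$ \textsc{Round-Robin} steps, so every label in $\{1,\dots,p^2\}$ transmits alone at least once within the window; charging each $v\in F$ the event in which $v$ first becomes inactive during the window---an event that must occur by the time $v$ takes its lone \textsc{Round-Robin} turn, since that turn informs all remaining uninformed neighbours of $v$---accounts for $\geq |F|>k/2$ distinct progress events in $k-2$ stages. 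Summing the $\Omega(1)$-per-stage bounds over all windows and invoking the cap $2n-1$ on total progress yields $O(n)$ stages, and multiplying by the $O(n/k)$ cost per stage completes the proof.

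The main obstacle I anticipate is precisely this last bookkeeping in the dense case: unlike the sparse case, where Lemma~\ref{prog} cleanly packages progress into a $2|F|$-stage window, here I have to ensure that each member of the initial $F$ is charged a distinct progress event despite nodes entering and leaving $F$ during the window; the transition-to-inactive charging sketched above is intended to make this bijective assignment straightforward.
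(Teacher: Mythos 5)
Your proposal is correct and follows essentially the same route as the paper's proof: the same per-stage cost $p(K+1)=O(n/k)$, the same case split on whether the active set exceeds $k/2$, Lemma~\ref{prog} with $r=2$ in the sparse case, the \textsc{Round-Robin} guarantee in the dense case, and the $2n-1$ progress cap. The only difference is that you spell out the dense-case charging argument (each initially active node becoming inactive at its lone \textsc{Round-Robin} turn) in more detail than the paper, which simply asserts that \textsc{Round-Robin} yields progress $k/2$ in $O(n)$ steps.
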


\begin{proof}
We need to calculate the time in order to make progress $2n-1$. Clearly, if at any stage the number of active nodes is more than $k/2$, then the \textsc{Round-Robin} procedure guarantees that the progress over the next $O(n)$ steps is at least $k/2$. Thus, the average progress per step is $\Omega(k/n)$. Otherwise, the number of active nodes will be at most $k/2$ and thus the progress per stage will be $\Omega(1)$ by using lemma \ref{prog} with $r=2$. Since each stage has $p \cdot (K+1) = O(n/k)$ steps, the average progress per step is again $\Omega(k/n)$. Consequently, we have that the total time complexity will be $O(n^2/k)$. 
\end{proof}

Finally, it is easy to observe that using \textsc{Oblivious K-Shot} with $k=\sqrt{n}$, we obtain an $O(n^{3/2})$ $k$-shot broadcasting algorithm for any $k > \sqrt{n}$.

%%%%%%%%%%%%%%%%%%%%%%%%%%%%%%%%%%%%%%%%%%%%%%%%%%%%%%%%%%%%%%%%%%%%%%%%%%%%

\section{Adaptive broadcasting protocols}
For a formal definition of an adaptive broadcasting protocol we will use a slight generalization of the model proposed by Kowalski and Pelc~\cite{KP04}. In our model for adaptive protocols, we allow a node to transmit a message to its neighborhood even before it receives the source message. An algorithm may use this kind of transmission for topology knowledge exchange which may influence actions in later rounds.
 We denote by $H_t(v)$ the view of node $v$ until the end of step $t$, i.e., a complete description of all the messages received (along with the corresponding sender's id) and send by  $v$ during each round $1,\ldots,t$. We will use the notion \emph{incoming view} for the description of the incoming messages. 
%Specifically, $H_t(v)$ is a sequence of pairs $(M_1, M_2 ,\ldots, M_t)$. $M_1$ is either the pair $(0,\emptyset)$ (we call this empty history) or the pair $(s, m)$, where $m$ is the message of the source node and $s$ is the label of the source node. If node $v$ does not receive a message at step $i$, then $M_i=(0,\emptyset)$. Otherwise, $M_i$ is the pair $(w, H_{i-1}(w))$, where $w$ is the label of the node from which node $v$ received a message at step $i$.

A broadcasting protocol can now be defined by a function $\pi(v, t, H_{t-1}(v))$, which takes values in the set \{\texttt{receive, transmit}\}. The function decides whether node $v$ with view $H_{t-1}(v)$ acts as a receiver (\texttt{receive}) or as a transmitter (\texttt{{transmit}}) at step $t$. If $v$ acts as a transmitter in step $t$, it sends its entire view until step $t-1$ along with its id, i.e., the message $(v, H_{t-1}(v))$. Note that the maximum information exchange occurs when transmitters send their entire view, since the receiver can always deduce any information from the received view. For completeness we assume that in the initialization phase of the protocol (\emph{step 0}) each player  $v\in V\setminus \{s\}$ has the view $H_0(v)=(\emptyset, \bot)$ which actually represents the lack of an initial input value. The source node $s$ has the initial view $H_0(s)=(\emptyset, m)$, where $m$ is the initial value which will be propagated by $s$.

\subsection{Broadcasting protocols and transmission Trees  }

Generalizing the lower bound approach of~\cite{BP97} for  any number of shots we introduce a construction, which we call the \emph{transmission tree}, that allows us to obtain a lower bound for the $k$-shot case. We consider the most general case of adaptive protocols, as defined above and for any such protocol $\pi $ we construct a network in which the delay of completing broadcasting with $\pi$ is significantly increased. We use the transmission tree tool to maximize the delay in the intermediate stages of achieving Broadcast.

\subsubsection*{Family of networks}
The class of radio networks $\calg$ we will use for our lower bound argument are graphs of a certain topology, namely the $n$ nodes of each such graph can be partitioned in $l$ layers; The first layer contains only the source node  $s$ and the next $l-2$ contain 2 nodes each and the last layer contains the rest of the nodes (1 or 2) to complete the partition. Moreover each node $v$ in layer $i$ is connected, with a directed edge $(v,w)$, to each node $w$ of layer $i+1$ and no other connections exist.

More concretely, For an $n$-node graph $G=(V,E)\in \calg$, $V$ can be partitioned in $l=\lfloor n/2\rfloor+1$ layers $L_1,\ldots,L_l$  s.t. $L_1=\{s\}$,  $|L_2|=\ldots=|L_{l-1}|=2$ and $L_l=V\setminus \cup_{i=1}^{l-1}L_i$. Moreover, $E=\{(w,v)\in L_i\times L_j \mid i,j\in\{1,\ldots l\}, j-i=1\}$. Having specified the topology of the family $\calg$ the different members of the family differ in the number of nodes and the assignment of the id's. The general topology structure of family $\calg$ is depicted in Figure~\ref{fig:netopology}.    

\begin{figure}[h]
\centering
\includegraphics[width=0.4\linewidth]{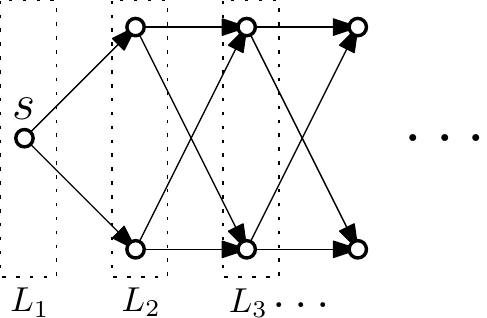}
\caption{Family of graphs $\calg$.}
\label{fig:netopology}
\end{figure}

\subsubsection*{Designing a ``bad'' graph}

Let us consider any deterministic $k$-shot broadcasting protocol $\pi$ which completes broadcasting in any graph with n nodes. We will construct a graph $G_\pi\in \calg$, by assigning ids to nodes, such that broadcasting is significantly slowed down. We will gradually construct graph $G_\pi$ by using the graph families $\calg_i$ as described in the following.

For an arbitrary assignment $ID_i$ of ids in the first $i$ ($i\in \{1,\ldots, l-1\}$) layers of family $\calg$ graphs, we define family $\calg_i\subseteq \calg$ of graphs with $n$ nodes that have the assignment $ID_i$ in their first $i$ layers and thus differ only in the next $l-i$ layers. Let $S$ be the set of the assigned ids and $A=V\setminus S$. Consider the execution of protocol $\pi$ in any $G\in \calg_i $.  Observe that in all graphs $G\in \calg_i$  nodes in layer $L_{i+1}$ receive from $L_i$ the same incoming view $H_j$, for every round $j$ of the execution. Incoming view sequence $(H_j)=(H_1, H_2, \ldots)$ can be determined through protocol $\pi$ given that the topology and the id's of the first $i$ layers are known. Observe that according to the view definition the term $H_j$ contains all the information contained in terms $H_1, \ldots, H_{j-1}$. We make a distinction between the incoming and the full view because in the following we will consider the actions that different nodes take given that they receive  the same view.

 Since $L_i$ contains all the incoming neighbors of nodes in $L_{i+1}$, and there is no directed path from nodes in $L_{i+1}$ to nodes in $L_i$, $(H_j)$ is guaranteed  to capture the whole (incoming) view of nodes in $L_{i+1}$ and thus their actions (\texttt{transmit, receive}) can be determined for every round of the execution. 
 To determine if node $v\in A$  transmits in round $t$ under incoming view $H_{t-1}$ (being in level $i+1$) one should simulate the execution of protocol $\pi$, where $v$ receives the view $H_{t-1}$
% ~\footnote{Observe here that $(H_i)$ should be a compatible incoming view sequence for all $v\in A$, i.e., sender ids in $(H_i)$ should not include nodes in $A$.} 
and construct $v$'s view $H_{t-1}(v)$ which may additionally contain messages sent by $v$~\footnote{Obviously the messages sent by $v$ don't have any effect on the incoming view due to the topology (no directed path exists from nodes in $L_{i+1}$ to nodes in $L_{i}$).}. Working this way we can define the round $t$ transmitting/sending nodes which belong to a set $A$ with 
$$S_t^A=\{v\in A \mid \pi(v,t,H_{t-1}(v))=`` \texttt{transmit}"\}$$
  and  the round $t$ receiving nodes with 
  $$R_t^A=\{v\in V \mid \pi(v,t,H_{t-1}(v))=``\texttt{receive}"\}$$.

Given a family of graphs $\calg_i$ and a protocol $\pi$ we will construct the family of graphs $\calg_{i+1}\subseteq \calg_i$ in which the delay of transmitting the message from $L_{i+1}$ to $L_{i+2}$ maximizes. Our approach is summarized to the ``worst" choice of the two ids of layer $L_{i+1}$, such that either by collision or non-transmission the message fails to transmit to $L_{i+2}$ for the maximum number of steps.

\subsubsection*{Transmission trees}

\begin{definition}
 A \emph{transmission tree} $T(A,\pi,(H_j))$ corresponding to broadcasting protocol $\pi$, incoming view sequence $(H_i)$ and node set $A$  is  a binary tree. The id of the root is the node set $A$ and every child's id is a subset of its father's id. The ids of the children form a partition of the father's id and all the leaves are singletons. For a node $P$ at depth $t-1$, its left child's id is $R_t^P$ whereas the right child of $P$ is the set $S_t^P$, i.e.,  the nodes in $P$ which are receiving (don't transmit), respectively transmitting, in round $t$ given that their incoming view is $(H_j)$.
 \end{definition}
 
% A transmission tree $T(A,\pi,(H_j))$ partially describes a protocol $\pi$  
 
Observe that given a family $\calg_i $ (which implies the function $ID_i$) and a protocol $\pi$ we can uniquely define, as previously described, the set of unassigned ids $A$ and the incoming view sequence $(H_j)$ that all nodes in layer $L_{i+1}$ receive in $\calg_i$ during the execution of $\pi$. We therefore use the notation $T(\pi, \calg_i)$ to refer to the transmission tree $T(A,\pi,(H_j))$ where $A$ and $(H_j)$ can be deduced by $\pi, \calg_i$ as described above. 
 
%  When the protocol and the incoming view is implied by the context we will simply denote the corresponding transmission tree with $T(A)$.
% 
% A transmission tree naturally corresponds to a (family of) protocols $\pi$ and a family $\calg_i$, where  $A$ are the unassigned ids and $(H_j)$ can be determined from the execution of $\pi$ as explained before. Therefore we will also use the notation $T(A,\pi, \calg_i)$. 
 
 More importantly, given a family $\calg_i$, any broadcasting protocol $\pi$ uniquely defines a transmission tree $T(\pi, \calg_i)$ which describes the actions that players in $A$ take, under the reception of the corresponding view $(H_j)$, executing protocol $\pi$.  
  The only non-trivial point in this correspondence is why the leaves, of the corresponding tree of any protocol $\pi$, have to be singletons. The reason for this is that if there was a leaf $P$ with $|P|\geq 2$ then this protocol would never achieve broadcasting in the family $\calg_{i+1}$ with $L_{i+1}\subseteq P$. 
  
\begin{definition} A \emph{$k$-shot transmission tree} is a transmission tree in which each branch contains at most $k$ right children.
\end{definition}
 This in fact ensures that each node will transmit in at most $k$ steps as desired for a $k$-shot protocol.

\begin{theorem}\label{theorem:1}

Given a broadcasting protocol $\pi$ and a family $\calg_i$, 
%with unassigned ids $A$, 
there exists  $\calg_{i+1}\subseteq \calg_i$ s.t. the relay of the message from layer $L_{i+1}$ to layer $L_{i+2}$ in all graphs of $\calg_{i+1}$ will be delayed as much as $height(T(\pi, \calg_i))$.

\end{theorem}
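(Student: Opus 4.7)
The plan is to construct $\calg_{i+1}$ by choosing, as the two labels of layer $L_{i+1}$, a pair of ids whose trajectories in $T = T(\pi,\calg_i)$ split as late as possible. Set $h = \mathrm{height}(T)$ and fix any leaf of $T$ at depth $h$; since leaves are singletons, it has the form $\{v\}$. Let $P^\star$ be the ancestor of this leaf at depth $h-1$. Because $P^\star$ is an internal node, $|P^\star|\geq 2$, and because any internal child of $P^\star$ would force $T$ to have depth at least $h+1$, both children of $P^\star$ must already be leaves, hence singletons. Therefore $P^\star=\{u,v\}$ for some $u\in A$, and $u,v$ coincide in every ancestor strictly above $P^\star$.

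I would then define $\calg_{i+1}\subseteq\calg_i$ as the subfamily in which the two vertices of $L_{i+1}$ carry precisely the ids $u$ and $v$. For every $G\in\calg_{i+1}$, these two vertices receive the same incoming view sequence $(H_j)$ that was used to build $T$, since both have $L_i$ as their only in-neighbourhood and the layered topology forbids any feedback from $L_{i+1}$ to $L_i$. The tree construction then forces $\pi(u,t,\cdot)=\pi(v,t,\cdot)$ for every round $t=1,\ldots,h-1$, so in each such round either both vertices transmit or both remain silent. In the first case every node of $L_{i+2}$ experiences a collision (since $u$ and $v$ are its only in-neighbours); in the second, no transmission at all leaves $L_{i+1}$. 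Either way no node of $L_{i+2}$ can receive the source message during the first $h-1$ rounds, and only at round $h$ --- when $u$ and $v$ fall into the two distinct children of $P^\star$ and take opposite actions --- can exactly one of them transmit and successfully deliver the message to $L_{i+2}$.

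This yields the claimed delay of $\mathrm{height}(T(\pi,\calg_i))$ uniformly across $\calg_{i+1}$. The main delicate point --- and the step I would check most carefully --- is the size argument for $P^\star$: it hinges on the stipulation that every leaf of $T$ is a singleton, since otherwise $P^\star$ could have a non-leaf child at depth $h$ and contradict the maximality of $h$. The remaining verifications (that $u$ and $v$ actually share their full views as long as they share ancestors in $T$, and that the topology genuinely gives them identical incoming views) follow directly from the structure of $\calg$ and from the observation already made in the excerpt that messages sent from $L_{i+1}$ have no effect on the incoming view of $L_{i+1}$.
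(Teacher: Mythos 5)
The proposal is correct and takes essentially the same approach as the paper: both arguments place in $L_{i+1}$ two ids drawn from a tree-node $P$ of cardinality at least $2$ at depth $height(T(\pi,\calg_i))-1$, so that the two vertices act identically (causing a collision or silence) in every round before $height(T(\pi,\calg_i))$. The only difference is that you make the choice of $P$ explicit as the parent of a deepest leaf and observe it has exactly two elements, whereas the paper simply asserts the existence of such a $P$ with $|P|\geq 2$.
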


\begin{proof}
To create the family $\calg_{i+1}$ we only have to choose the two ids $v_1,v_2$ that will be assigned to layer $L_{i+1}$. The relay of the message from $L_{i+1}$ to $L_{i+2}$ will happen in the first round that only one of $v_1,v_2$ will transmit, in a different case either a collision or a non-transmission phase will occur. The definition of the transmission tree implies that there is a tree-node $P$ with $|P|\geq 2$ and $depth(P)=height(T)-1$. Choosing nodes $L_{i+1}=\{v_1,v_2\}\subseteq P$, the first round where one of them transmits alone will be round $t>depth(P)=height(T)-1$, therefore transmission to $L_{i+2}$ will be delayed until round $height(T(\pi, \calg_i))$.

\end{proof}

\begin{theorem}\label{theorem:gi}
The minimum height of a $k$-shot transmission tree for family $\calg_i$ with unassigned ids $A$ and $|A|=a$ over all $k$-shot broadcasting protocols $\calb$ is 
$$\min_{\pi\in  \calb}height\left( T(\pi,\calg_i)\right)=\Omega(a^{\frac{1}{k}})$$
\end{theorem}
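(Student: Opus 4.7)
My plan is to reduce the statement to a purely combinatorial lemma about binary trees and then estimate leaf counts via a standard binomial sum. The first step is to observe that, by definition, $T(\pi, \calg_i)$ is a binary tree whose leaves are the singletons of a partition of $A$, so it has exactly $a$ leaves; moreover, the $k$-shot hypothesis is equivalent to the condition that every root-to-leaf path contains at most $k$ edges descending to a \emph{right} child (a transmission step). The question therefore becomes: what is the minimum height of a binary tree with $a$ leaves in which every root-to-leaf path uses at most $k$ right edges?

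I would then let $N(h,k)$ denote the maximum number of leaves of such a tree of height at most $h$, and split at the root. The left subtree still satisfies the $k$-shot constraint, while the right subtree has consumed one right-edge at the root and therefore satisfies a $(k-1)$-shot constraint; both have height at most $h-1$. This yields the recurrence
\[ N(h,k) \;\leq\; N(h-1,k) + N(h-1,k-1), \]
with base cases $N(h,0)=1$ (only a chain of left children is permitted, ending in a single leaf) and $N(0,k)=1$ (a single node is itself a leaf). A short induction using Pascal's identity $\binom{h}{i} = \binom{h-1}{i} + \binom{h-1}{i-1}$ then gives
\[ N(h,k) \;\leq\; \sum_{i=0}^{k}\binom{h}{i}. \]

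To conclude, I would apply the standard tail estimate $\sum_{i=0}^{k}\binom{h}{i} \leq (eh/k)^{k}$ valid for $h \geq k$. Any $k$-shot transmission tree for $\calg_i$ has $a$ leaves, so $a \leq N(h,k) \leq (eh/k)^{k}$, which forces $h \geq (k/e)\,a^{1/k} = \Omega(a^{1/k})$. Since this argument uses only the structural properties common to every $T(\pi, \calg_i)$, it applies uniformly over $\pi \in \calb$, proving the minimum over protocols is $\Omega(a^{1/k})$.

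The main subtle point is identifying the correct ``shift'' in the recurrence: recognizing that each right-child step corresponds to consuming one of the $k$ allowed transmissions, so the right subtree inherits a $(k-1)$-constraint. Once this reduction is in place the rest is a routine two-variable induction plus the familiar bound $\sum_{i=0}^{k}\binom{h}{i}=O(h^{k})$; I do not anticipate any further obstacle.
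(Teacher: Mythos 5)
Your proof is correct, and it takes a genuinely different route from the paper's. The paper argues by induction on $k$: it looks at the leftmost branch $LB$ of the tree, observes that each right child of an $LB$-node roots a $(k-1)$-shot transmission tree, and splits into two cases according to whether all these right children have size $O(a^{(k-1)/k})$ (forcing $|LB|=\Omega(a^{1/k})$) or some right child is larger (forcing a tall subtree by the induction hypothesis). You instead reduce everything to a leaf count: the tree has exactly $a$ singleton leaves, and the $k$-shot condition caps the number of right edges per root-to-leaf path, so the maximum leaf count $N(h,k)$ of a height-$h$ tree obeys $N(h,k)\le N(h-1,k)+N(h-1,k-1)$, giving $N(h,k)\le\sum_{i=0}^{k}\binom{h}{i}\le (eh/k)^k$ and hence $h\ge (k/e)\,a^{1/k}$. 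Your version buys an explicit constant, a cleaner two-line induction via Pascal's identity, and it sidesteps the slightly informal $O(\cdot)$ versus $\omega(\cdot)$ dichotomy in the paper's case analysis (which, for a single tree, should really be a threshold comparison such as $|P|\le a^{(k-1)/k}$ versus $|P|>a^{(k-1)/k}$). The only point to make explicit is the regime $h<k$, where the tail estimate does not apply; but there $a\le 2^h<2^k$ is bounded by a constant for fixed $k$, so the asymptotic claim is trivial, and for $a>2^k$ one always has $h\ge\log_2 a>k$. The paper's induction is marginally more self-contained in that it needs no binomial tail bound, but both arguments are establishing the same underlying combinatorial fact.
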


\begin{proof}
Wlog we can assume only protocols corresponding to transmission trees where every internal node has a  right child. The reason for that is that if a protocol $\pi$ corresponds to a tree in which a node $v$ only has a left (non-transmitting) child $w$, then deleting this edge along with the left child $w$ and connecting the children of $w$ to $v$ will result to a transmission tree of  non-greater height and thus a protocol which achieves a non-slower relay. 

For the case of $k=1$ one can observe that each right  child $P$ will contain only one node ($|P|=1$) and will be a leaf. This is obvious from the definition of the $k$-shot transmission tree; since each branch contains at most $k=1$ right child and all the leafs are singletons, each right child must be a singleton-leaf. Therefore the minimum height tree will result if the root and every left child has a right child leaf. Subsequently for the case of $k=1$, 
$$\min_{\pi\in \calb} T(\pi,\calg_i)=a-1=\Omega(a)$$
Therefore the theorem holds for $k=1$.

Assume that the claim holds for $k=i-1$, then we will prove that it holds for $k=i$. First consider an $i$-shot transmission tree $T$ and its leftmost branch $LB$ including the root. Observe that each right child $P$ of a node in $LB$ is actually a root of an $(i-1)$-shot transmission tree since all nodes in $P$ have only $i-1$ shots left. By the induction hypothesis we know that the the minimum height of every such tree is $\Omega\left(|P|^\frac{1}{i-1}\right)$.

For any $i$-shot transmission tree $T$ there are two cases (two types of transmission trees):

\begin{enumerate}
\item Every right child $P$ of nodes in $LB$  has cardinality $|P|=O\left(a^\frac{i-1}{i}\right)$. In this case, the length of $LB$ in this tree will be of order 

$$|LB|=\frac{a}{O\left(a^\frac{i-1}{i}\right)}=\Omega \left( \frac{a}{a^\frac{i-1}{i}}\right)=\Omega(a^{1-\frac{i-1}{i}})=\Omega(a^{1/i}) $$ 

 since for every pair of $LB$ nodes  $P_r, P_{r+1}$ of depth $r$ and $r+1$ respectively, it holds that $|P_{r+1}|=|P_r|-O(a^{\frac{i-1}{i}})$. 
 
Moreover it holds that $height(T)\geq |LB|= \Omega(a^\frac{1}{i})$ and therefore $$height(T)=\Omega(a^\frac{1}{i})$$.

\item There exists a right child $P$ of nodes in $LB$  with cardinality $$|P|\neq O(a^\frac{i-1}{i})\Leftrightarrow|P|=\omega(a^\frac{i-1}{i})$$

By the induction hypothesis every such tree $T_P$ with root $P$ will have a minimum height of order,

$$height(T_P)=\Omega\left( \left(\omega(a^{(i-1)/i})\right)^\frac{1}{i-1}\right)= \Omega \left(  a^{\frac{i-1}{i}\cdot\frac{1}{i-1}}\right)=\Omega\left(a^\frac{1}{i}\right) $$

Moreover it holds that $height(T)\geq height(T_P)= \Omega(a^\frac{1}{i})$ and therefore, in this case also it holds that,
 $$height(T)=\Omega(a^\frac{1}{i})$$.

\end{enumerate}
 
Therefore the minimum height of any $k$-shot transmission tree is 

$$\min_{\pi\in  \calb}height\left( T(\pi,\calg_i)\right)=\Omega(a^{\frac{1}{k}})$$

\end{proof}

\begin{theorem}
For any $k$-shot adaptive broadcasting protocol $\pi$, there is a $n$-node graph $G\in \calg$ where $\pi$ needs $\Omega(n^{\frac{1+k}{k}})$ rounds to achieve Broadcast.
\end{theorem}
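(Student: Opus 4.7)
The plan is to iteratively construct the ``bad'' graph $G_\pi \in \calg$ by applying Theorems~\ref{theorem:1} and~\ref{theorem:gi} at every layer, summing the resulting per-layer delays. This is the natural way to combine the single-layer delay guarantee of Theorem~\ref{theorem:1} with the transmission-tree height bound of Theorem~\ref{theorem:gi}.

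First, I would start from the family $\calg_1$, in which only the source id is assigned, so the set of unassigned ids has size $|A| = n-1$. Then, for each $i = 1, 2, \dotsc, l-2$ (with $l = \lfloor n/2 \rfloor + 1$), I would apply Theorem~\ref{theorem:gi} to $\calg_i$, obtaining a minimum transmission tree height of order $\Omega(|A|^{1/k}) = \Omega\bigl((n-2i+1)^{1/k}\bigr)$, since every previously constructed layer consumed exactly two ids from $A$. By Theorem~\ref{theorem:1}, I would then pick the two ids for layer $L_{i+1}$ realising this height, producing $\calg_{i+1} \subseteq \calg_i$ in which the relay of the message from $L_{i+1}$ to $L_{i+2}$ is delayed by $\Omega\bigl((n-2i+1)^{1/k}\bigr)$ rounds. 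After $l-2$ iterations the assignment is complete and any $G_\pi \in \calg_{l-1}$ is the graph we want.

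Since the graph is a layered DAG whose directed edges go only from $L_i$ to $L_{i+1}$, the message must reach $L_{i+2}$ before it can start propagating to $L_{i+3}$, so the per-layer delays accumulate sequentially. The total broadcasting time is therefore lower bounded by
\begin{equation*}
\sum_{i=1}^{l-2} \Omega\bigl((n-2i+1)^{1/k}\bigr) \;=\; \Omega\!\left( \int_{1}^{n/2} x^{1/k}\, dx \right) \;=\; \Omega\!\left(n^{\frac{1+k}{k}}\right),
\end{equation*}
which yields the claimed bound.

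The main conceptual point I expect to need to argue carefully is that these delays genuinely compose additively across layers: the inductive construction produces a \emph{single} fixed graph $G_\pi$, and for this graph one needs to verify that, independently of the adaptive behaviour of $\pi$, the view sequence $(H_j)$ seen by $L_{i+1}$ during the execution on $G_\pi$ coincides with the one used to define $T(\pi,\calg_i)$, so that the delay predicted by Theorem~\ref{theorem:1} really materialises at every layer. The remaining issues are purely cosmetic: the last layer $L_l$ may contain a single node, and for the last few values of $i$ the quantity $n-2i+1$ is too small for the hidden constant of Theorem~\ref{theorem:gi} to bite, but both effects are absorbed by the $\Omega$ notation and do not alter the final exponent $\tfrac{1+k}{k}$.
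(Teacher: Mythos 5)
Your proposal is correct and follows essentially the same route as the paper: iterate Theorems~\ref{theorem:1} and~\ref{theorem:gi} layer by layer, note that the per-layer delays $\Omega\bigl((n-2i+1)^{1/k}\bigr)$ accumulate because the graph is a layered DAG, and lower-bound the resulting sum (the paper does this by observing that half the terms exceed $(n/2)^{1/k}$ rather than by an integral comparison, which is an immaterial difference). Your explicit remark that one must check the view sequence $(H_j)$ seen by $L_{i+1}$ in the final graph coincides with the one defining $T(\pi,\calg_i)$ is a point the paper leaves implicit in the definition of the families $\calg_i$, but it is the same argument.
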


\begin{proof}
Repeatedly applying theorems~\ref{theorem:1},\ref{theorem:gi} for $i=1, \ldots \lfloor n/2 \rfloor$ we construct a graph in which $\pi$ will achieve broadcasting in time asymptotically greater or equal than
$$S_1=(n-1)^{1/k}+(n-3)^{1/k}+\cdots+2^{1/k}$$
in the case where $n$ is odd and 
$$S_2=(n-1)^{1/k}+(n-3)^{1/k}+\cdots+3^{1/k}$$
when $n$ is even.

Observe that in the case where $n$ is odd the sum $S_1$ is comprised by $\lfloor n/2\rfloor$ terms and the half of these terms are lower than the $(n/2)^\frac{1}{k}$. Hence it holds that, 
$$S_1\ge \left\lfloor \frac{n}{4} \right\rfloor \cdot \left(\frac{n}{2} \right)^\frac{1}{k}>\left(  \frac{n}{4}-1 \right)  \cdot \left(\frac{n}{2} \right)^\frac{1}{k} \ge \frac{1}{8} (n-4)n^\frac{1}{k}\Rightarrow S_1=\Omega\left(n^{\frac{1+k}{k}}\right)$$

Where the last inequality holds because $2^\frac{1}{k}\geq 2, k\in \mathbb{N}$. Using similar arguments we can show that $S_2=\Omega\left(n^{\frac{1+k}{k}}\right)$.

%The sum of the $k$'th roots of first $n$ natural numbers have been studied in~\cite{Shekatkar12} where the following is proven
% 
%$$\sum_{x=1}^{n-1}x^{\frac{1}{k}}=\frac{k}{k+1}n^{\frac{1+k}{k}}-\frac{1}{2}n^{\frac{1}{k}}-\phi_n(k)=\Omega(n^{\frac{1+k}{k}})$$
%
%and the function $\phi_n$ is bounded between $0$ and $\frac{1}{2}$.
%
%It can be easily proven that 
%$S_1> \frac{1}{2}{\displaystyle\sum_{x=1}^{n-1}}x^{\frac{1}{k}}, \text{ for $n>2$ }$  and $S_2> \frac{1}{2}{\displaystyle\sum_{x=1}^{n-1}}x^{\frac{1}{k}}-1^\frac{1}{k}, \text{ for $n>3$ }$. This means that 
%$$S_1,S_2=\Omega(n^{\frac{1+k}{k}})$$

%Give a lower bound of this sum (based on http://arxiv.org/pdf/1204.0877v2.pdf) and we are OK! 

\end{proof}

In the conference version of this paper~\cite{KP11}, only the case of adaptive algorithms for the 1-shot case was studied. A different approach was followed to prove a lower bound for this specific case and is deferred in the appendix because it might be interesting per se. The case however is captured in the result of this section for $k=1$.

%%%%%%%%%%%%%%%%%%%%%%%%%%%%%%%%%%%%%%%%%%%%%%%%%%%%%%%%%%%%%%%%%%%%%%%%%%%%
\section{Conclusions}

In this paper, we initiate the study of deterministic $k$-shot broadcasting in radio networks with unknown topology. For the oblivious model, we manage to show an exact energy-time tradeoff for values of $k \leq \sqrt{n}$. It remains an open question to match the lower bound of $\Omega(n^2/k)$ for all values of $k$.

For the most general case of adaptive $k$-shot protocols, we manage to obtain a lower bound on the broadcasting time which differs from the bounds of the oblivious case. This means that either the bound is not tight or the information exchange between the nodes can actually guarantee a lower termination time in broadcasting. This may happen due to exchange of topology knowledge that the nodes obtain throughout such a protocol. The study of the bound's tightness as well as the devise of adaptive $k$-shot protocols constitute interesting research directions.

\bibliographystyle{plain}
\bibliography{kshot}

\newpage

\appendix

\begin{center}
 \textbf{\huge Appendix}
\end{center}

\section{Alternative proof for the adaptive 1-shot case}

%In this final section, we present an $\Omega(n^2)$ lower bound which holds for any adaptive algorithm for the special case of $1$-shot broadcasting. The bound holds not only for directed, but also for symmetric graphs.
%
%We first need a formal definition of an \emph{adaptive} broadcasting protocol. We will use the model proposed by Kowalski and Pelc~\cite{KP04}. We denote by  $H_t(v)$ the \textit{message history} of node $v$ until the end of step $t$. Specifically, $H_t(v)$ is a sequence of pairs $(M_1, M_2, \dotsc , M_t)$. $M_1$ is either the pair $(0,\emptyset)$ (we call this \textit{empty history}) or the pair $(s,m)$, where $m$ is the message of the source node and $s$ is the label of the source node. If node $v$ does not receive a message at step $i$, then $M_i = (0,\emptyset)$. Otherwise, $M_i$ is the pair $(w,H_{i-1}(w))$, where $w$ is the label of the node from which node $v$ received a message at step $i$.
%
%A broadcasting protocol can now be defined by a function $\pi(v,t,H_{t-1}(v))$, which takes values in the set $\{ \verb|receive|, \verb|send|\}$. The function decides whether node $v$ with message history $H_{t-1}(v)$ acts as a \textit{receiver} ($\verb|receive|$) or as a \textit{transmitter} ($\verb|send|$) at step $t$. If $v$ acts as a transmitter, it sends the message $(v,H_{t-1}(v))$. Note that it is enough to restrict our attention to sending only the entire history of the transmitter, since the receiver can always deduce any information from the received history. 

Let us consider any deterministic $1$-shot broadcasting protocol $\mathcal{P}$ which completes broadcasting in any graph with $n$ nodes. We construct a graph $G_\mathcal{P}$ such that broadcasting is slowed down as much as possible. We start by considering the family $\mathcal{G}_0$ of all possible connected graphs with $n$ nodes. The construction proceeds by considering the steps of protocol $\mathcal{P}$, at each step refining the family of graphs. We will show that by the end of the construction, we are left with a graph in which $\mathcal{P}$ completes broadcasting in $\Omega(n^2)$ steps.  

We divide the construction into $n-3$ \textit{stages} and denote by $\mathcal{G}_i$ the family of graphs by the end of stage $i$. The construction is based on the following lemma.

\begin{lem} \label{slow_n}
Assume that the family $\mathcal{G}_{i-1}$ includes only graphs which start with a chain $S$ with $i$ nodes ($i \leq n-3$) and the last node $v_S$ of $S$ transmits no earlier than step $T$. Then, there exists a node $w \in \overline{S} = V \setminus V(S)$ such that the family $\mathcal{G}_{i}$ includes only graphs which start with chain $S \circ w$ and $w$ transmits no earlier than step $T+n-i$.
\end{lem}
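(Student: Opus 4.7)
My plan is to exploit the 1-shot constraint together with the correctness of $\mathcal{P}$ on every $n$-node graph. For each candidate $u \in V \setminus V(S)$, I work with the \emph{pure-view transmission time} $\tau(u)$: the step at which $\mathcal{P}$ would make $u$ transmit under the assumption that $u$'s entire incoming view consists solely of the message delivered by $v_S$ at step $T$. Since the chain $S$ is identical across all graphs in $\mathcal{G}_{i-1}$, the content of $v_S$'s transmission is the same determined message in every such graph, so $\tau(u)$ is a deterministic function of $u$'s id. The goal is to show that some $u$ satisfies $\tau(u) \geq T + n - i$, let $w$ be that node, and define $\mathcal{G}_i$ as the subfamily of $\mathcal{G}_{i-1}$ whose graphs start with the chain $S \circ w$.

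Two properties of $\tau$, both derived from validity of $\mathcal{P}$, will do the work. First, $\tau(u) < \infty$ for every $u$: otherwise consider the graph in $\mathcal{G}_{i-1}$ in which $v_S$'s only out-neighbor in $V \setminus V(S)$ is $u$, $u$ is the unique in-neighbor of some node $z$, and the remaining non-chain nodes are attached as leaves for connectivity. There $u$'s view stays pure forever, $u$ never transmits, $z$ is never reached, and $\mathcal{P}$ fails to broadcast. Second, $\tau(u_1) \neq \tau(u_2)$ for every pair $u_1 \neq u_2$: consider the witness graph in which $v_S$ connects to both $u_1$ and $u_2$, both of them connect only to a common successor $z$, $z$ has no other in-neighbors, and the remaining non-chain nodes are leaves. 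In that witness $u_1$ and $u_2$ retain the pure view up to their first transmissions, so $\tau(u_1) = \tau(u_2)$ would force a simultaneous collision at $z$ that exhausts both their only shots, leaving $z$ forever unreached. With these two properties the $n - i$ values $\tau(u)$ are distinct positive integers strictly greater than $T$, so by pigeonhole $\max_u \tau(u) \geq T + n - i$.

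Finally, I will confirm that once $w$ and $\mathcal{G}_i$ are fixed, $w$'s actual view in every graph of $\mathcal{G}_i$ remains the pure one all the way up to step $\tau(w)$: only $w$ receives the source message at step $T$ (since $v_S$'s unique non-chain out-neighbor is $w$), no other non-chain node can transmit before it has itself received the source message, and all chain-interior nodes have exhausted their single shot on the chain relays by step $T$. Hence no message can reach $w$ before $w$ itself transmits, and $w$ indeed transmits at step $\tau(w) \geq T + n - i$. The main obstacle I anticipate is the distinctness argument: each witness graph must simultaneously lie in $\mathcal{G}_{i-1}$, keep the pure views of $u_1$ and $u_2$ undisturbed until the potential collision step, and leave $z$ with no alternative route, all while using only the $n - i$ available non-chain ids.
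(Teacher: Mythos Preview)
Your approach is correct and essentially the same as the paper's: both establish that the ``pure-view'' transmission times of the $n-i$ candidates in $\overline{S}$ are pairwise distinct via the same two-node collision witness (two candidates feeding a common sink $z$ that has no other in-neighbour), and then extract a node transmitting no earlier than $T+n-i$. The paper packages this as an iterative refinement---at each step $T+j$ at most one candidate can transmit, so it is peeled off and the family is shrunk---whereas you phrase it as a single pigeonhole on the injective map $u\mapsto\tau(u)$; the underlying argument and the resulting family $\mathcal{G}_i$ are identical.
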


\begin{proof}
For simplicity, we set $\mathcal{H}_0 = \mathcal{G}_{i-1}$. We also denote by $\mathcal{H}_0^{w} \subseteq \mathcal{H}_0$ the family of graphs in $\mathcal{H}_0$ where node $w$ is connected to $v_S$. Without loss of generality, let us assume that node $v_S$ transmits its history $H_{T-1}(v_S)$ at step $T$. In any graph $G \in \mathcal{H}_0^w$, node $w$ receives the same history $H_{T-1}(v_S)$ at step $T$ and has received only empty history before step $T$. Thus, for any graphs $G_a, G_b \in \mathcal{H}_0^w $, it holds that  $H_{T}(w,G_a) = H_{T}(w,G_b) = H_{T}(w)$. Since protocol $\mathcal{P}$ determines the action of any $w$ at step $T+1$ from $\pi(w,T+1,H_{T}(w))$, the action of any node $w$ is the same for any graph in $\mathcal{H}_0^w$ and we denote it by $\pi_{S}(w)$.

Assume that for $w_i, w_j \in \overline{S}$ it holds that  $ \pi_{S}(w_i) = \pi_{S}(w_j) = \verb|send|$. Then, consider any graph $G_{ij} \in \mathcal{H}_0^{w_i} \cap \mathcal{H}_0^{w_j}$ such that $w_i$ and $w_j$ are the only nodes connected to a node $v_t$. At step $T+1$, the nodes $w_i$ and $w_j$ transmit simultaneously, a collision occurs and thus $v_t$ never gets the message, a contradiction (see Figure~\ref{collision}). Thus, there exists at most one node $w_1 \in \overline{S}$ such that $\pi_{S}(w_1) = \verb|send|$. In this case, we refine the family of graphs to the family $\mathcal{H}_{1} = \mathcal{H}_{0} \setminus \mathcal{H}_0^{w_1}$ and set $\overline{S}_1 = \overline{S} \setminus \{w_1\}$. Otherwise, when no node from $\overline{S}$ transmits, we set $\mathcal{H}_{1} = \mathcal{H}_{0}$ and $\overline{S}_1 = \overline{S}$.

\begin{figure}[h]
  \begin{center}
   {\includegraphics[scale=0.6]{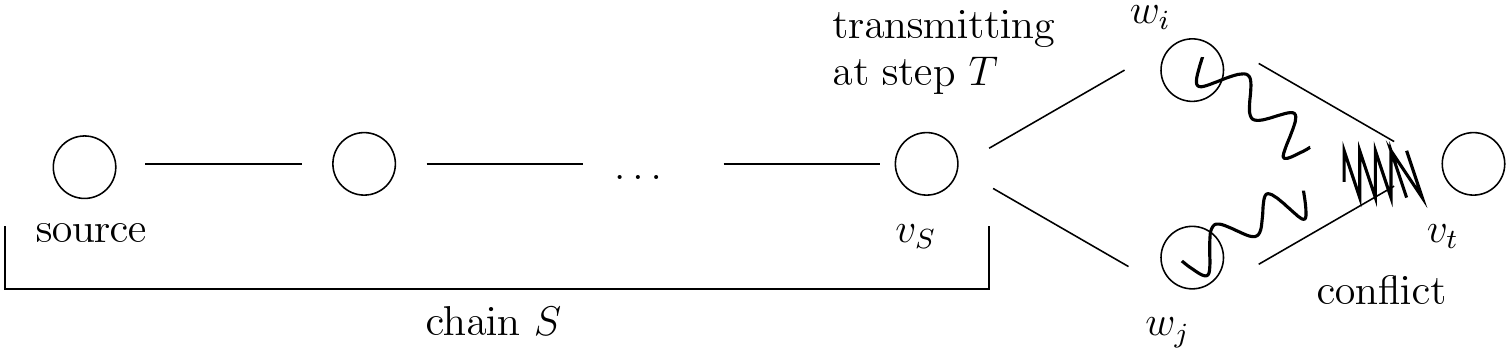}} 
   \caption{A collision graph: a conflict occurs when nodes $w_i$ and $w_j$ transmit simultaneously. Since $v_t$ has no other neighbors and the protocol is 1-shot, $v_t$ never gets the message.} 
   \label{collision}
   \end{center}
\end{figure}

Clearly, for every graph in $\mathcal{H}_{1}$, no node from $\overline{S}$ transmits at step $T+1$. This means that only nodes from $V(S)$ transmit at step $T+1$. Consequently, all the nodes in $\overline{S}_1$ may receive a message only from $v_S$ and thus the history (and action) of any node $w \in \overline{S}_1$ is the same for any graph in $\mathcal{H}_1^w$. Using the same argument as before, there exists at most one node $w_2$ such that $\pi(w_2,T+2,H_{T+1}(w_2)) = \verb|send|$. Then, we further refine the family of graphs to $\mathcal{H}_{2} = \mathcal{H}_{1} \setminus \mathcal{H}_1^{w_2}$ and set $\overline{S}_2 = \overline{S}_1 \setminus \{ w_2\}$ (or $\mathcal{H}_{2} = \mathcal{H}_{1}$, $\overline{S}_2 = \overline{S}_1$ if no node transmits at $T+2$). We may apply this argument repeatedly at least $n-i-1$ times, each time further refining the family of graphs. The final family $G_{i} = \mathcal{H}_{l}$ ($l \geq n-i-1$) consists of graphs where the only node of $\overline{S}_{l} = \{ w_{l+1} \}$ is connected to $v_S$; thus, every graph in $\mathcal{G}_{i}$ starts with the chain $S \circ w_{l+1}$. Moreover, $w_{l+1}$ does not transmit earlier than step $T+l+1$ in any graph of $\mathcal{G}_{i}$.
\end{proof}

\begin{thm}
For any 1-shot broadcasting protocol $\mathcal{P}$, there exists a graph $G_{\mathcal{P}}$ where $\mathcal{P}$ needs $\Omega(n^2)$ steps to complete broadcasting.  
\end{thm}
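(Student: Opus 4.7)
The plan is a straightforward iterated application of Lemma \ref{slow_n}. I would set $\mathcal{G}_0$ to be the entire family of connected $n$-node graphs; these all trivially ``start with a chain'' $S_1=\langle s\rangle$ of length $1$, with the source transmitting no earlier than step $T_0=1$. This supplies exactly the hypothesis of Lemma \ref{slow_n} for $i=1$.

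Next I would invoke the lemma successively for $i=1,2,\dots,n-3$. At stage $i$, the lemma yields a node $w_i\in V\setminus V(S_i)$ and a refined family $\mathcal{G}_i$ in which every graph starts with the chain $S_{i+1}=S_i\circ w_i$ of length $i+1$, and in which $w_i$ transmits no earlier than step $T_i\geq T_{i-1}+(n-i)$. Telescoping these inequalities gives
$$T_{n-3} \;\geq\; 1 + \sum_{i=1}^{n-3}(n-i) \;=\; 1 + \sum_{j=3}^{n-1}j \;=\; \Omega(n^2).$$
Picking any $G_{\mathcal{P}}\in\mathcal{G}_{n-3}$ then yields a graph whose chain tip $v_{S_{n-2}}$ does not transmit until step $\Omega(n^2)$.

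To conclude that broadcasting is not complete before step $\Omega(n^2)$, I would appeal to the definition of ``starts with a chain'': only $v_{S_{n-2}}$ has edges leaving the chain into $V\setminus V(S_{n-2})$, so the (at most two) nodes outside the chain can receive the source message only through $v_{S_{n-2}}$, possibly followed by a short relay among themselves. Hence none of them is informed before $v_{S_{n-2}}$ first transmits, which proves the bound. The chief care needed is bookkeeping: verifying that the chain property required by Lemma \ref{slow_n} is preserved as a hypothesis of each successive application (which is built directly into the lemma's own conclusion) and respecting the range $i\leq n-3$, so that the extra one or two nodes left outside at the end are exactly what is needed for the final obstruction argument.
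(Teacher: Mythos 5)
Your proposal is correct and follows essentially the same route as the paper: both iterate Lemma \ref{slow_n} over $n-3$ stages starting from the trivial one-node chain at the source, telescope the delays to obtain $1+\sum_{i=1}^{n-3}(n-i)=\Omega(n^2)$, and then observe that the one or two nodes left outside the final chain cannot be informed before the chain tip first transmits. The only cosmetic difference is that the paper phrases the iteration as an explicit induction on stages and adds a final ``$+1$'' step, which does not affect the asymptotic bound.
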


\begin{proof}
We will prove the lower bound using induction on the number of stages. Specifically, we will show family $\mathcal{G}_i$ includes only graphs starting with a chain of $i+1$ nodes and the last node of the chain does not transmit before step $1+ \sum_{j=1}^i (n-j)$. 

At the end of stage 0, the chain consists only of the source node, which transmits at step 1. Thus, the claim holds trivially. Using the induction hypothesis, we know that the family $\mathcal{G}_{i}$ includes only graphs which start with a chain of $i+1$ nodes and the last node transmits no earlier than step $1+ \sum_{j=1}^i (n-j)$. Applying lemma \ref{slow_n}, the construction refines the family $\mathcal{G}_{i}$ to $\mathcal{G}_{i+1}$, where every graph of $\mathcal{G}_{i+1}$ starts with a chain of $(i+1)+1$ nodes and the last node transmits no earlier than step $1+ \sum_{j=1}^i (n-j) + (n-i-1) =1+ \sum_{j=1}^{i+1} (n-j)$. 

After stage $n-3$, the family $\mathcal{G}_{n-3}$ includes only three graphs (for each configuration of the remaining two nodes). In one of the graphs (graph $G_{\mathcal{P}}$), the protocol needs one more step to complete broadcasting. Thus, in $G_{\mathcal{P}}$ the protocol completes broadcasting no earlier than step
$$ 1+\sum_{i=1}^{n-3} (n-i)+1 = \frac{n(n-1)}{2}-1$$   
\end{proof}

The proof of this theorem is constructive. Thus, for any 1-shot broadcasting protocol $\mathcal{P}$, we can actually construct a graph $G$ where $\mathcal{P}$ needs at least $\Omega(n^2)$ steps to complete broadcasting.

\end{document}